\newtheorem{remark}{Remark}
\newtheorem{lemma}{Lemma}
\newcommand{\bydef}{:=}
 \newcommand{\beff}{{\ensuremath{\beta_{\mathrm{eff}}}}}
 \newcommand{\bB}{\mathbb{B}}
 \newcommand{\bC}{\mathbb{C}}
 \newcommand{\bD}{\mathbb{D}}
 \newcommand{\bG}{\mathbb{G}}
 \newcommand{\bx}{\mathbf{x}}
 \newcommand{\efs}{\ensuremath{\mathsf{EFS}}}
\DeclareMathOperator*{\argmax}{arg\,max}
 \newcommand{\bA}{\ensuremath{\mathbf{A}}}
 \newcommand{\Suc}{\ensuremath{\mathsf{S}}\xspace}
 \newcommand{\Inf}{\ensuremath{\mathsf{I}}\xspace}
 \newcommand{\Rem}{\ensuremath{\mathsf{R}}\xspace}
\begin{document}

\title{Group centrality in optimal and suboptimal vaccination for epidemic models in contact networks.}


 \author[1]{J. Orestes Cerdeira\footnote{jo.cerdeira@fct.unl.pt, ORCID-ID:0000-0002-3814-7660}}

 \author[1,2]{Fabio A. C. C. Chalub\footnote{Corresponding author: facc@fct.unl.pt, ORCID-ID: 0000-0002-8081-9221}}

 \author[1]{Matheus Hansen\footnote{mh.francisco@fct.unl.pt, ORCID-ID: 0000-0003-0125-9033}}

%
 \affil[1]{Center for Mathematics and Applications (NOVA Math), NOVA School of Science and Technology, Universidade NOVA de Lisboa, Quinta da Torre, 2829-516, Caparica, Portugal.}
 \affil[2]{Department of Mathematics, NOVA School of Science and Technology, Universidade NOVA de Lisboa, Quinta da Torre, 2829-516, Caparica, Portugal.}

 \date{\today}

 \maketitle




\begin{abstract}
The pursuit of strategies that minimize the number of individuals needing vaccination to control an outbreak is a well-established area of study in mathematical epidemiology. However, for certain diseases, public policy tends to prioritize immunizing vulnerable individuals over epidemic control. As a result, optimal vaccination strategies may not always be effective in supporting real-world public policies. A similar situation happens when a new vaccine is introduced and is in short supply, as target priority groups for vaccination have to be defined. In this work, we focus on a disease that results in long-term immunity and spreads through a heterogeneous population, represented by a contact network. We study four well-known group centrality measures and show that the GED-Walk offers a reliable means of estimating the impact of vaccinating specific groups of indi\-vi\-duals, even in suboptimal cases. Additionally, we depart from the search for target individuals to be vaccinated and provide proxies for identifying optimal groups for vaccination. While the GED-Walk is the most useful centrality measure for suboptimal cases, the betweenness (a related, but different centrality measure) stands out when looking for optimal groups. This indicates that optimal vaccination is not concerned with breaking the largest number of transmission routes, but interrupting geodesic ones.
\end{abstract}

\noindent\textbf{MSC-class}: 05C90, 90C35, 92D30\\
\textbf{ACM-class}: G.2.2\\
\textbf{Keywords}: Networks, Epidemiological models, Group centrality, Vaccination.

\section{Introduction}

Protecting vulnerable groups and efficiently controlling disease transmission may be challenging to achieve simultaneously. For certain diseases such as flu and COVID-19, young healthy adults tend to have more contacts that could potentially transmit the disease compared to elderly individuals; however, the disease manifestation tends to be milder in the former group than in the latter one. In practice, while most mathematical models focus on efficient disease control~\cite{Wang_2016,Holme_2017,Holme_2017b,Hughes_IEEE_2020,Doutoretal_JMB2016}, for certain infectious diseases governments prioritize socially relevant groups over disease control. That may particularly be true in the initial phases of vaccination campaigns, when vaccines are scarce, cf.~\cite{GreenBook_2020,Rashi_2023} for COVID-19. The difficult question of defining priorities when vaccines are scarce is again relevant in the more recent outbreak of mpox in the Democratic Republic of Congo~\cite{Adepoju2024}, see also~\cite{Shibadas2019,Nguyen2025} for the case of a newly introduced vaccine against Dengue fever, in which vaccination priority is given to teenagers.

In this work, we develop a framework that allows us to study the impact on the disease dynamics of the vaccination of any given group of individuals. In particular, assuming that the population is organized in a contact network, we will provide proxies based solely on the topology of the network to estimate the outbreak size as a function of the group of vaccinated individuals. In the next step, we will use heuristics to optimize that group. These proxies and heuristics will be studied assuming simple epidemic dynamics and will be based solely on the population structure. In the sequel, we will validate our result by studying directly the epidemic dynamics and showing that a suitable choice of the proxy will provide a group close to the one providing optimum disease control, in this last case, obtained directly from numerical simulations. 

We start this work by exploring the well-known SIR (susceptible, infectious, removed/ recovered) model, introduced in~\cite{Ross_1916, Kermack_McKendrick_1927} for unstructured populations. Here, on the contrary, the population is structured by a contact network represented by an undirected graph $G=(V,E)$. The set of vertices (or nodes) $V$ represents individuals in the population, and the set of edges (or links) $E$ represents potential transmission contacts.
Each individual in the population is in one and only one of the three states: \textbf{S}usceptible, \textbf{I}nfectious, and \textbf{R}ecovered. The transitions between these states are defined as: $S+I\stackrel{\beta}{\to}2I$, $I\stackrel{\gamma}{\to}R$, where $\beta$ (transmission rate) and $\gamma$ (recovery rate) are strictly positive parameters. For more information on compartmental models in epidemiology, including the SIR model and its variations, cf.~\cite{anderson1991infectious, Thieme_2003}.

Notable examples of diseases described by compartmental models in undirected contact networks include influenza~\cite{Martin_2011}, Ebola~\cite{Rizzo_2016}, and COVID-19~\cite{Nizami_2023}. The transmission of non-human diseases can be described in a similar manner, as observed in primates~\cite{Romano_2016}, dolphins~\cite{Leu_2020}, and raccoons~\cite{Hirsh_2013}. For a comprehensive description of epidemic models in networks, see the reviews~\cite{Keeling_2005,Pastor_2015} and the reference work~\cite{kiss2017mathematics}.

One of the main inspirations for the present work comes from the Influence Maximization Problem (IMP), first introduced in a study of peer-to-peer marketing in a population structured by a contact network~\cite{Domingos_Richardson_2001}. The main objective of IMP is to identify influential individuals within a network as potential targets for advertising. IMP is a diffusion model where there is a set of initially active nodes, and each active node has the potential to influence its neighbors, thereby potentially activating them. Two specific realizations of IMP are the Linear Threshold Diffusion Model (LTDM) and the Independent Cascade Model (ICM). In LTDM, a node becomes activated with a probability that depends on the number of its active neighbors, while in ICM, each active neighbor independently attempts to activate an inactive node, with only one opportunity to activate each neighbor. When the infection rate is low, both formulations are equivalent~\cite{Holme_2017}. IMP aims to achieve efficient spreading, seeking a group of nodes $S^*$ with a given cardinality that maximizes the final set of active nodes $\sigma(S^*)$. For further formulations of IMP and results in this context, please refer to~\cite{LiFanWang_2018,Solanki2025,Jaouadi_2024}.

The current work has two main objectives: i) to estimate the final number of contaminated individuals as a function of the initially vaccinated ones; and ii) to find optimal subsets to be vaccinated.

The first objective is independent of the second but has the latter as a particular case. In a realistic vaccination campaign, the goal is not only to minimize the total number of contaminated individuals during an outbreak but also to protect specific subpopulations. Therefore, it is important to estimate the effects of vaccination on the dynamics, even for groups that may not be optimal. As a corollary, it will help to find optimal subsets. In simpler terms, {this means finding a set $V^*$ that, when removed from the graph, minimizes $\sigma[V\backslash V^*](S^*)$ for a given fixed initial set of active sites $S^*$.  

Even though the second problem has attracted most of the attention of modelers working on the vaccinations problem, the first problem has a larger potential to assist decision-makers. 

Many developments in the study of outbreaks in networks were inspired by the IMP problem described above. Until recently, the problem of identifying optimal sets $S^*$ of \emph{spreaders} and optimal sets $V^*$ of \emph{blockers} has been thought to be closely connected~\cite{Namtirtha_2021,Holme_2017b,Bucur_2020,Wei_2022}. Assuming fixed cardinality, an optimal set of \emph{spreaders} is given by a set of individuals that being initially infected will maximize the final number of infected individuals; an optimal set of \emph{blockers} is a set of individuals that after being vaccinated will minimize the final number of infected individuals, assuming an initial condition given by one infectious individual in a random node.

In fact, according to \cite{Cheng_Yong-Hong_2020}, there is an equivalence between finding $S^*$ and $V^*$ in the LTDM. However, this may not hold in the ICM~\cite{Radicchi_PhysRevE}, as we will corroborate with precise statements in Rmk~\ref{rmk:spreaders_blockers}. We note that in epidemic models, contagion rules share similarities with the ICM, but not with the LTDM, as each contact between individuals potentially independently transmits the disease.

Furthermore, determining if a given individual will become eventually infectious is an NP-complete problem~\cite{Shapiro_2012,Kempe_Tardos_2003}. Additionally, identifying the most efficient spreader $S^*$ is NP-hard~\cite{Habiba_2011}. Even taking into consideration the non-equivalence discussed in the previous paragraph, it is reasonable to expect that the identification of $V^*$ will be time-consuming from a computational perspective. Therefore, finding topological proxies for the dynamics is important.

Network centrality, or, simply `centrality', is a key concept in network studies and is increasingly important in understanding epidemics. In simple terms, centrality measures the importance of a node or group of nodes (denoted as $S$) to the entire network. There are various centrality measures, some of which may or may not be significant in specific contexts. Some measures are local (such as the number of first neighbors of $S$), while others are non-local (such as the number of geodesics between all possible pairs of nodes that pass through $S$). The former is easier to compute but has strong limitations in its applicability, while the latter provides a more representative view of the relevance of $S$ in the network dynamics but is more challenging to compute. For more information, the interested reader should refer to~\cite{Newman_2010} for a discussion of these properties and others related to networks.

Models for the identification of node centrality and their application in the study of disease spread are referenced in~\cite{Arruda_2014,Dekker2013NetworkCA,Wei_2022,Holme_2017b,Bucur_2020,Berahmand_2018,Namtirtha_2021}. The concept of target group identification for the IMP using group centrality is discussed in~\cite{Ganguly_2024}. However, to the best of our knowledge, no previous work considers both disease dynamics in a network and the search for target group identification using group centrality measures simultaneously.

The outline of this work is as follows: in Sec.~\ref{sec:basic}, we introduce the SIR model in networks and the most relevant group centrality concepts. In Sec.~\ref{sec:exact}, we present some simple graphs in which exact calculations are possible and compare the epidemic final size (EFS) with several distinct concepts of group centrality. In Sec.~\ref{sec:numerical}, we study classical graphs in which exact results are not available and numerically compare the EFS with group centrality. We also show that centrality measures are valuable for estimating
the EFS of arbitrary groups. In Sec.~\ref{sec:annealing}, we look for optimal groups with respect to given centrality measures and use the simulated annealing algorithm to show that vaccination strategies based on group centrality result in an EFS that is nearly optimal. In Sec.~\ref{sec:USMap}, we validate our ideas using a real-world network which, although not based on individual interactions, helps illustrate some of our key findings and limitations. We conclude with a discussion of the merits and limitations of our proposal in the discussion section, Sec.~\ref{sec:discussion}.

\section{Basic notation}
\label{sec:basic}

Let $G=(V,E)$ be an undirected \emph{graph} where $V=\{1,2,3,\dots,N\}$ is a finite set of \emph{vertices}, and $E\subset \{\{i,j\}| i\ne j, i,j\in V\}$ 
is the set of \emph{edges}. Let $\bA=\left(a_{ij}\right)_{i,j\in V}\in\{0,1\}^{N^2}$, where $a_{ij}=1$ if $\{i,j\}\in E$ and $a_{ij}=0$ otherwise, be the \emph{adjacency matrix} of $G$.

We will use the terms \emph{networks} and \emph{nodes} to refer to graphs and vertices, respectively. It is also common to call the edges as \emph{links}.

A \emph{walk} of size $k$ between nodes $i$ and $j$ is a sequence of nodes $\{i_0,i_1,\dots,i_k\}\subset V$, such that $i_0=i$, $i_k=j$ and $a_{i_li_{l+1}}=1$, for $l=0,\dots,k-1$. A \emph{path} is a walk without repeated nodes. 
A network is \emph{connected} if there is a path connecting every pair of nodes. 
The \emph{distance} $d_{ij}$ between nodes $i$ and $j$ is the minimum size of a path from node $i$ to node $j$; all paths with size $d_{ij}$ are called \emph{geodesic paths}. 

In the study of disease dynamics in networks, each node represents an individual, and each edge is a contact between individuals possibly transmitting the disease. Each node $i$ is in one of the three following states: $x_i=\Suc$, $x_i=\Inf$, or $x_i=\Rem$, representing \emph{susceptible}, \emph{infectious} or 
\emph{recovered} individuals, respectively, the base of the SIR model. For all practical purposes, vaccinated individuals are equivalent to recovered individuals.
 This is frequently called the SIR/V model. States are time-dependent, therefore, whenever necessary we will call $x_i(t)$ the state of node $i$ at time $t$.

Initial conditions, except with otherwise stated, will be given by $x_{i_*}(0)=\Inf$ and $x_i(0)=\Suc$ for all $i\ne i_*\in V$. Node $i_*$ is the \emph{patient zero} of the outbreak.

The SIR model is defined by two positive constants, $\beta,\gamma\in(0,1]$. States are updated according to the following rules, which shall be applied simultaneously to all $i\in V$.
\begin{enumerate}
    \item If $x_i(t)=\Suc$, then $x_i(t+1)=\Suc$ with probability $(1-\beta)^{\kappa_i}$. The exponent $\kappa_i=\sum_ja_{ij}\delta_j$, where $\delta_j=1$ if $x_j(t)=\Inf$ and $0$ otherwise, is the number of first neighbors of $i$ at state \Inf at time $t$. With complementary probability, $x_i(t+1)=\Inf$.
    \item If $x_i(t)=\Inf$, then $x_i(t+1)=\Rem$ with probability $\gamma$ and $x_i(t+1)=\Inf$ with complementary probability.
    \item If $x_i(t)=\Rem$, then $x_i(t+1)=\Rem$ with probability 1.
\end{enumerate}
Note that the update from $\bx(t)\to\bx(t+1)$, where $\bx=(x_i)_{i\in V}$ is proceed only after all nodes $i\in V$ were analysed. In this case, it is common to say that the update is \emph{syncronous}.

After a sufficiently long time, all nodes will be \Suc or \Rem. The system reaches a stationary state and the outbreak finishes. The cardinality of the final set of \Rem nodes is the \emph{epidemic final size} (EFS).

\begin{remark}\label{rmk:betagamma}
  In standard SIR models based on differential equations, the parameters $\beta$ and $\gamma$ are defined differently. In this context, $\beta$ represents the rate at which susceptible individuals become infected per infectious individual, per unit time, while $\gamma$ indicates the rate at which individuals recovers per unit of time. As a result, both $\beta$ and $\gamma$ can take on any non-negative value~\cite{Thieme_2003}.

In this work, however, we define $\beta$ and $\gamma$ as transition probabilities. Specifically, \(\beta\) represents the probability of susceptible individuals transitioning to the Infectious class based on the number of infectious neighbors per unit of time, while $\gamma$ indicates the probability of individuals transitioning out of the Infectious class per unit of time. Consequently, the values for $\beta$ and $\gamma$ are constrained to the interval $[0, 1]$. We do not consider cases where either parameter equals zero, as this would lead to a trivial model.
\end{remark}

In the next lemma, we introduce a simplification that will help analytical and numerical calculations:
\begin{lemma}\label{lem:rescaling}
Let $G=(V,E)$ be a network, and consider the SIR model with parameters $(\beta,\gamma)$. Then, the EFS is identical to the EFS of the SIR model with parameters $(\beff,1)$ with
\[
\beff\bydef\frac{\beta}{1-(1-\beta)(1-\gamma)}\in(0,1]\ .
\]
\end{lemma}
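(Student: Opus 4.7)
My plan is to isolate the process on a single edge and show that the ``effective'' per-edge transmission probability in the $(\beta,\gamma)$-SIR model equals $\beff$, matching the one-shot transmission probability of the reduced model. This identification will underlie an equivalence between the two epidemic processes.

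First, fix an edge $\{i,j\}$ and suppose $i$ becomes infectious while $j$ is still susceptible. Under the recovery rule, the infectious period of $i$ is geometric with parameter $\gamma$, so $P(T=k)=\gamma(1-\gamma)^{k-1}$ for $k\ge 1$. Conditional on $T$, the per-step Bernoulli$(\beta)$ transmission trials on the edge $\{i,j\}$ are independent, so the probability that at least one of them succeeds (while $j$ remains susceptible) is $1-(1-\beta)^T$. Averaging over $T$ and summing the resulting geometric series gives
\[
1 - \sum_{k=1}^{\infty}\gamma(1-\gamma)^{k-1}(1-\beta)^{k} \;=\; 1 - \frac{\gamma(1-\beta)}{1-(1-\beta)(1-\gamma)} \;=\; \frac{\beta}{1-(1-\beta)(1-\gamma)} \;=\; \beff.
\]
The membership $\beff\in(0,1]$ is then transparent: positivity from $\beta>0$, and the upper bound reduces to $\beta\le \beta+(1-\beta)\gamma$, which is immediate.

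Second, in the reduced $(\beff,1)$-model each infectious node has exactly one transmission opportunity per incident edge, with success probability $\beff$. Thus the per-edge transmission probability is $\beff$ on both sides, and the reduced dynamics coincides with independent bond percolation at parameter $\beff$ rooted at the patient zero $i_*$: the EFS is simply the size of the percolation cluster of $i_*$. To finish, I would couple the original dynamics with this percolation representation edge by edge so that the sets of eventually-infected nodes coincide.

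The main obstacle is this last coupling. In the original model the transmission attempts from a single infectious node to its several susceptible neighbors share the same random infectious period $T$, whereas in the reduced model they are genuinely independent across edges. A careful inductive argument along the epidemic front is therefore needed, re-randomizing edge by edge and using that, conditionally on the currently infected set, each boundary edge fires with marginal probability $\beff$; one then argues that only these marginals enter the distribution of the infected set, so the EFS distributions agree.
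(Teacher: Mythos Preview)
Your single-edge computation of $\beff$ is correct and is exactly what the paper does: its proof consists solely of computing, for an adjacent pair in states $(\Inf,\Suc)$, the probability that the susceptible node is ever infected, summing the same geometric series, and then declaring that ``rescaling time'' yields $\gamma_{\mathrm{eff}}=1$. The paper never addresses the multi-edge coupling you raise, so up to that point your proposal reproduces and in fact goes beyond the paper's argument.

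The obstacle you identify is genuine, but the resolution you sketch --- that a front-by-front coupling works because ``only these marginals enter the distribution of the infected set'' --- fails. For a fixed infectious node $i$ the events ``$i$ eventually transmits to $j$'', for the various neighbours $j$, all depend on the same geometric lifetime $T_i$ and are positively correlated in the $(\beta,\gamma)$ model, whereas they are independent in the $(\beff,1)$ model. Already on the star $S_3$ (centre $1$, leaves $2,3$) with patient zero at $1$, writing $p=\beff$, one has $P(\text{EFS}=3)=\sum_{k\ge 1}\gamma(1-\gamma)^{k-1}\bigl(1-(1-\beta)^{k}\bigr)^{2}$ in the $(\beta,\gamma)$ model but $p^{2}$ in the $(\beff,1)$ model, and these differ strictly for $0<\beta<1$, $0<\gamma<1$ by Jensen's inequality. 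Hence the EFS distributions do not agree and no coupling making the two infected sets coincide exists; on graphs with longer alternative routes (e.g.\ the $4$-cycle) the same correlation makes even the \emph{expected} EFS differ. The step you flagged as the ``main obstacle'' is thus not merely delicate but in general unbridgeable; the paper's proof carries the same gap, and the reduction to $\gamma=1$ is exact only when the infectious period is deterministic.
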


\begin{proof}
Consider a given edge between a node $i$, at state \Inf, and an adjacent node $j$ at state \Suc. Eventually, the pair $(i,j)$ will be in one of the following three states: $(\Rem,\Suc)$, $(\Rem,\Inf)$, or $(\Inf,\Inf)$. Furthermore, $\lim_{t\to\infty}x_i(t)=\Rem$.

The probability of finding the pair $(i,j)$ at state $(\Rem,\Suc)$ is given by the probability that in the first time-step $i$ does not infect $j$ and recovers, $(1-\beta)\gamma$, plus the probability that it does not infect nor recover in the first time-step, does not infect and recover in the second time-step, etc. Therefore, the probability that $i$ does not infect $j$ will be given by
\[
(1-\beta)\gamma+(1-\beta)(1-\gamma)(1-\beta)\gamma+\dots=(1-\beta)\gamma\sum_{k=0}^\infty(1-\beta)^k(1-\gamma)^k=\frac{(1-\beta)\gamma}{1-(1-\beta)(1-\gamma)}\ .
\]
The convergence of the above series is guaranteed from the fact that $0<\beta,\gamma<1$ implies that $(1-\beta)(1-\gamma)<1$, and the resulting sum is a nonnegative number strictly bounded by 1.
The probability $\beff$ that $i$ eventually infects $j$ is the complementary probability and is, therefore, a positive number bounded by 1. Rescaling time such that this happens in one interaction, we have $\gamma_{\mathrm{eff}}=1$.
\end{proof}

See Table~\ref{tab:beff} for a pictorial proof of the above Lemma.

From now on, except if otherwise stated, we will drop the subindex '$\mathrm{eff}$' and will consider $\gamma=1$.

\begin{table}
\centering
\begin{tabular}{c|c}
    Transitions & Probabiity\\
    \hline
\phantom{\rule{0cm}{0.5cm}}\raisebox{-0.1cm}{\includegraphics[width=0.1\textwidth]{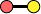} \raisebox{0.1cm}{\mbox{${\large\boldsymbol{\Rightarrow}}$}} \includegraphics[width=0.1\textwidth]{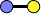}} & $(1-\beta)\gamma$\\[0.1cm]
\raisebox{-0.1cm}{\includegraphics[width=0.1\textwidth]{Table1_figA} \raisebox{0.1cm}{\mbox{${\large\boldsymbol{\Rightarrow}}$}} \includegraphics[width=0.1\textwidth]{Table1_figA} \raisebox{0.1cm}{\mbox{${\large\boldsymbol{\Rightarrow}}$}} \includegraphics[width=0.1\textwidth]{Table1_figB}} & $(1-\beta)^2(1-\gamma)\gamma$\\[0.1cm]
    \raisebox{-0.1cm}{\includegraphics[width=0.1\textwidth]{Table1_figA} \raisebox{0.1cm}{\mbox{${\large\boldsymbol{\Rightarrow}}$}}
    \includegraphics[width=0.1\textwidth]{Table1_figA} \raisebox{0.1cm}{\mbox{${\large\boldsymbol{\Rightarrow}}$}}
    \includegraphics[width=0.1\textwidth]{Table1_figA} \raisebox{0.1cm}{\mbox{${\large\boldsymbol{\Rightarrow}}$}}
    \includegraphics[width=0.1\textwidth]{Table1_figB}} & $(1-\beta)^3(1-\gamma)^2\gamma$\\
$\textbf{{\large\vdots}}$ & $\textbf{{\large\vdots}}$
\end{tabular}
\caption{Consider an adjoint pair of nodes, initially at state (\Inf,\Suc), marked by red and yellow, respectively. After a sufficiently long time, the infectious one will become \Rem (blue), while the susceptible one will be either infectious (and then recovered), counting for the EFS, or will remain susceptible, without being infected. The table shows all possible states in time such that the second node is never infected, up to a final state (\Rem,\Suc) in which the link becomes irrelevant. The right column shows the probability that the transition (\Inf,\Suc)$\Rightarrow$(\Rem,\Suc) happens in exactly $\kappa$ units of time (interactions). In this case, in the first $\kappa-1$ interactions the infectious node did not infect nor recover, and in the last one, it did not infect and recovers.}
\label{tab:beff}

\end{table}

Note that the above lemma reduces the SIR model to the ICM, as $\gamma=1$ means that each infected node has exactly one opportunity to infect each of its susceptible neighbors.

Let $G=(V,E)$ be a connected undirected network. We will consider 
four centrality measures for groups of nodes $S\subset V$. Namely,
\begin{enumerate}
    \item \textbf{Closeness:} Let $i\not\in S$, and let $d_{iS}\bydef\min_{j\in S}d_{ij}$ be the distance from node $i$ to $S$. Then
    \[
    \bC_{S}\bydef\left[\frac{1}{N-\#S}\sum_{i\not\in S}d_{iS}\right]^{-1}\ ,
    \]
    where $\#S$ is the cardinality of the set $S$.
    The closeness of $S$ is the inverse of the average distance to $S$ of all remaining nodes in the network.

    \item \textbf{Degree:} The degree of a node is defined by the number of first neighbors this node has. For a set $S$
    \[
    \bD_S\bydef\#\{i\not\in S|d_{iS}=1\}\ .
    \]
This is the number of nodes outside $S$ that can be reached from $S$ in a single step.
    \item \textbf{Betweenness:} For all pairs of nodes $i,j\in V$, with $i\ne j$, consider the set of geodesic paths between $i$ and $j$ given by $\mathsf{g}_{(i,j)}$. Then
\[
\bB_{S}\bydef
\sum_{\substack{i,j\in V\\ i\ne j}}\,
\frac{\#\{\ell\in\mathsf{g}_{(i,j)}|\mathop{\mathrm{int}}\ell\cap S\ne\emptyset\}}{\# \mathsf{g}_{(i,j)}}\ .
\]
where the interior of a path $\ell=\{v_0,\cdots,v_k\}$, is given by $\mathop{\mathrm{int}}\ell=\{v_1,\cdots,v_{k-1}\}$. 
The betweenness of $S$ is the sum of the fraction of geodesics that cross $S$ for all pairs $i\ne j$. Note that paths from $i\to j$ and $j\to i$ are considered different. 

    \item \textbf{GED-Walk:} The group exponentially decaying walk, or GED-Walk, was introduced in~\cite{Angriman_2020}, as a generalization of several centrality measures, in particular of betweenness. 
    In this case, walks are considered instead of solely paths or geodesic paths. 
    However, shorter walks contribute more to the centrality measure. Explicitly, let $\phi_\mu(S)$ be the number of walks of size $\mu$ containing at least one node in $S$ and let $0<\alpha<1$ be a fixed parameter. Then
    \[
    \bG_S^\alpha\bydef\sum_{\mu=1}^\infty\alpha^\mu\phi_\mu(S)\ .
    \]
    As the weight of a walk of size $\mu$ in the above summand is $\alpha^\mu$, we will consider in this work $\alpha=\beff$, unless stated otherwise.
\end{enumerate}

\begin{remark}
    There are different definitions of the betweenness centrality measure in the literature. Beyond trivial differences in normalization parameters, some references restrict the geodesics' initial and final nodes to lie outside the set $S$~\cite{Lagos_JGO24}, while others allow extreme points to be in $S$, but a geodesic only counts if it \emph{crosses} the set $S$ (i.e., its interior intersects the set $S$)~\cite{Puzis_PRE2007}. The latter definition is adopted in this work and is also the one implemented in the NetworKit, the modulus of Python used in the numerical simulations~\cite{Staudt}.   
\end{remark}

In Fig.~\ref{fig:Lisboa_Almada} we show a simple example that highlights the advantage of considering group centrality instead of single node centrality; it also helps the 
reader to check the definitions above, as it is not difficult to explicitly calculate all values, with the exception of the GED-Walk. It is a network composed of two cliques linked by a small and a long bridge. The contact points of the two cliques and the two bridges are $e$, $d$, $f$, $j$. 
The two nodes with the highest betweenness are $e$ and $f$, which are the contact points between the small bridge and the cliques. However, when considering groups of two nodes, the pairs $\{e,j\}$ and $\{d,f\}$, which are the two non-adjacent contact nodes in the two bridges, have maximum group betweenness. For other centrality measures, cf.~Fig.~\ref{fig:Lisboa_Almada}.

\begin{figure}
\centering
\includegraphics[width=\textwidth]{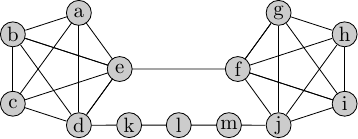}
\caption{Let $x\in \{a,b,c,d\}$ and $y\in \{g,h,i,j\}$. Every geodesic between $x$ and $y$ passes through $e$ and $f$, which are the vertices with the highest betweenness. The sets of two vertices with the highest group betweenness are $\{e,j\}$ and $\{f,d\}$. The same applies to closeness centrality, with $\bC_{e}=\bC_{f}=12/21$, $\bC_{\{d,j\}}=11/12$. For the GED-Walk, with $\alpha=0.5$, the vertices with the highest centrality are $e$ and $f$, while $\{e,j\}$ and $\{f,d\}$ have the highest centrality among sets of two vertices. The vertices with the highest degree are $d$, $e$, $f$, and $j$; but the group of two vertices with the highest degree centrality is $\{d,j\}$. Vertex betweenesses are given by $\bB_{e}=\bB_{f}=48$, $\bB_{d}=\bB_{j}=24$, $\bB_{k}=\bB_{m}=15$, $\bB_{l}=12$ with all the other being equal to zero. The sets of two vertices with highest betweenness are $\bB_{\{e,j\}}=\bB_{\{f,d\}}=69$, followed by $\bB_{\{d,e\}}=\bB_{\{f,j\}}=67$. For comparison $\bB_{\{e,f\}}=58$.
}
\label{fig:Lisboa_Almada}
\end{figure}

In addition to the node and group centrality measures discussed above, it is also worth mentioning the concept of network centralization, a measure introduced in~\cite{Freeman1978} that provides a global assessment of whether the network is built around a central node or is more evenly distributed.

Assume any centrality measure $\chi$, and let $\chi_i$ be its value at node $i\in V$. Assume further that the centrality measure is normalized such that $\max_{i,j\in V}|\chi_i-\chi_j|=1$ and let $\chi^*=\max\{\chi_i|i\in V\}$ be the maximum centrality of any node $i\in V$. Then the centralization of the network is given by
\[
 \chi_{G}=\frac{1}{N-1}\sum_{i\in V}\left[\chi^*-\chi_i\right]\ .
\]

For homogeneous networks, such as the cycle and the complete graph, the network centralization index $\chi_G=0$, whereas for the star network it attains the maximum possible value of 1. These networks will be examined in Sec.~\ref{sec:exact}; other examples, discussed in this and subsequent sections, display centralization values between 0 and 1. Since this concept is constructed based on individual node centrality, it does not play a central role in our analysis and will not be explored in further detail. For a more comprehensive discussion, we refer the reader to~\cite{Koschutzki2005}.

\section{Exact results from simple graphs}
\label{sec:exact}

In this section, we derive exact expressions for the EFS for several simple graphs, for any arbitrary set of vaccinated individuals. Furthermore, in these cases, finding the first three centrality measures for arbitrary sets $V'$ and directly comparing their values with the EFS when $V'$ represents the set of vaccinated individuals is not difficult.  We will show that, for the graphs under consideration, betweenness and GED-Walk are the centrality measures that most effectively identify the groups of individuals to be vaccinated. 

We will use the following notation: $\efs^G_i(N)$ is the expected epidemic final size (EFS) of the network $G=(V,E)$ of size $N=\#V$ when the patient zero occupies node $i$. The expected EFS is calculated under the assumption that the patient zero is uniformly distributed across all nodes in the network and that no individuals have been vaccinated:
\[
\efs^G(N)\bydef\frac{1}{N}\sum_i\efs^G_i(N)\ .
\]

We call $\overline{\efs}^{G,V'}(N)$ the EFS of a graph $G$, when the nodes of set $V'\subset V$ have been vaccinated, in particular $\overline{\efs}^{G,\emptyset}=\efs^G$. All values depend on $\beta$, which will be omitted. 

We will study three types of graphs. The  \emph{path graph}, the \emph{cycle graph}, and the \emph{star graph} (see Fig.~\ref{fig:simple_graphs}). 
We conclude with a remark on the \emph{complete graph}. 

\begin{figure}
\centering
\includegraphics[width=0.98\textwidth]{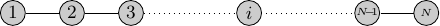}\\
\begin{minipage}{0.48\textwidth}
\includegraphics[width=0.98\textwidth]{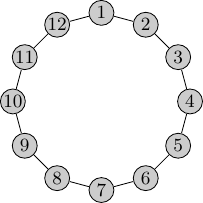}\\
\end{minipage}%
\hfill\begin{minipage}{0.48\textwidth}
\includegraphics[width=0.98\textwidth]{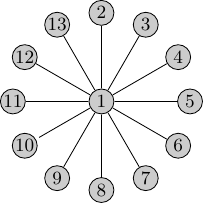}\\
\end{minipage}%
\caption{The path graph $P_N$ (above), the cycle graph $C_N$, with $N=12$ (left, below) and the star graph $S_N$, with $N=13$ (right, below). 
Both the path and the cycle graphs clearly illustrate the differences between two vaccination strategies: selecting the top $k$ individuals with the highest centrality and selecting the group of $k$ individuals with the highest combined centrality. The path graph further highlights the distinctions between using group closeness and group betweenness centrality measures for targeting vaccination groups. The star graph emphasizes the critical role of the central node in disease dynamics.
}
\label{fig:simple_graphs}
\end{figure}

\textbf{Path graph}. The path graph $P_N$ is defined by $V=\{1,\dots,N\}$, and $E=\{\{i,i+1\},i=1,\dots,N-1\}$, i.e., the entries of the adjacency matrix are $a_{i,i\pm 1}=1$ and $a_{ij}=0$, otherwise. 
The EFS, if patient zero is at vertex 1 (one of the extremities), is given by 
\begin{align*}
\efs_1^{\mathrm{path}}(N)&=\sum_{i=1}^{N-1}i\beta^{i-1}(1-\beta)+N\beta^{N-1}\\
&=\frac{1-N\beta^{N-1}+(N-1)\beta^{N}}{1-\beta}+N\beta^{N-1}=\frac{1-\beta^N}{1-\beta}\ .
\end{align*}
For an arbitrary initial condition $1\leq i \leq N$, we have: 
\[
\efs_i^{\mathrm{path}}(N)=\efs_1^{\mathrm{path}}(i)+\efs_1^{\mathrm{path}}(N+1-i)-1=\frac{1+\beta}{1-\beta}-\frac{1}{1-\beta}\left(\beta^i+\beta^{N+1-i}\right)\ .
\]

\begin{remark} 
It is straightforward to derive that 
\[
\efs_{i+1}^{\mathrm{path}}(N)-\efs_i^{\mathrm{path}}(N)= \beta^i - \beta^{N-i} \begin{cases} 
    > 0, & \text{if } ~ i < N/2, \\
    = 0, & \text{if } ~ i  = N/2, \\
    < 0, & \text{if }  ~ i > N/2
\end{cases}
\]
and conclude that 
$\efs_i^{\mathrm{path}}(N)$ is a concave function, and 
\[
\argmax{\efs_i^{\mathrm{path}}(N)} = 
\begin{cases} 
    \lceil \frac{N}{2} \rceil, & \text{if } N \text{ is odd}, \\
    \frac{N}{2} \text{ and } \frac{N}{2} + 1, & \text{if } N \text{ is even}.
\end{cases}
\]
\end{remark}

The expected EFS for the path graph is 
\begin{equation}
\label{eq:severity_line}
\efs^{\mathrm{path}}(N)=\frac{1}{N}\sum_{i=1}^N \efs_i^{\mathrm{path}}(N)=\frac{1+\beta}{1-\beta}-\frac{2(\beta-\beta^{N+1})}{N(1-\beta)^2}\ .
\end{equation}

We now establish the EFS of the path graph $P_N$ when the nodes of an arbitrary subset of nodes  $V'=\{v_1,v_2 , \dots , v_k\}\subset V$, $v_1< v_2 < \dots < v_k$, have been vaccinated.

Define $v_0=0$, $v_{k+1}=N+1$, and let $w_i=v_{i}-v_{i-1}-1$ be the number of nodes in $P_N$ between two consecutive nodes of $V'$. Patient zero appears in any node, with equal probability. If he or she appears at $V'$, then the EFS will be equal to zero. With probability $w_i/N$, it will appear in a segment of size $w_i$. Therefore,
\begin{align}
\nonumber
\overline{\efs}^{\mathrm{path}, V'}\!(N)&\bydef \sum_{i=1}^{k+1}\frac{w_i}{N}\efs^{\mathrm{path}}(w_i)\\
\label{eq:EFShatline}
&=\frac{(N-k)(1+\beta)}{N(1-\beta)}-\frac{2\beta (k+1)}{N(1-\beta)^2}+\frac{2}{N(1-\beta)^2}\sum_{i=1}^{k+1}\beta^{w_i+1}\ .
\end{align}

Minimize the EFS is equivalent to minimize the sum $\sum_{i=1}^{k+1}\beta^{w_i+1}=\sum_{i=1}^{k+1}\beta^{v_{i}-v_{i-1}}$.
Differentiating with respect to $v_i$ and equating to zero, we find $\log\beta\left(\beta^{v_{i}-v_{i-1}}-\beta^{v_{i+1}-v_i}\right)=0$. Therefore 
\begin{equation}
\label{eq:para_v}
v_i^0-v_{i-1}^0=v_{i+1}^0-v_i^0\quad\Longleftrightarrow\quad v_i^0=\frac{v_{i+1}^0+v_{i-1}^0}{2}
\end{equation}
We conclude that to minimize the EFS the vaccinated vertices must be as equally spaced as possible.

We now obtain an explicit formula for the betweenness of $V'$, computing all geodesics (unique unidirectional straight lines) linking given nodes $x\in V$ and $y\in V$ that cross $V'$. We divide the formula into three parts:
\begin{enumerate}
    \item If $x,y\not\in V'$. For fixed $i=1,\dots,k+1$, the number of geodesics starting in the $i$-th component, not finishing in the $i$ component, and that crosses $V'$ is given by $\omega_i\times(N-k-\omega_i)$. The total number of geodesics in this situation is given by
    \[
    \sum_{i=1}^{k+1}\omega_i(N-k-\omega_i)=(N-k)^2-\sum_{i=1}^{k+1}\omega_i^2\ .
    \]
    \item If $x\in V'$, $y\not\in V'$. For fixed $i=1,\dots,k$, the number of geodesics starting at $v_i$, that do not finish in $V'$, and crosses $V'$ is given by $N-k-w_{i}-w_{i+1}$. The total number of geodesics is given by
    \begin{align*}
    2\sum_{i=1}^{k}(N-k-\omega_{i}-\omega_{i+1})&=2k(N-k)-4\sum_{i=1}^{k+1}\omega_i+2\omega_1+2\omega_{k+1}\\
    &=2(k-2)(N-k)+2\omega_1+2\omega_{k+1}\ .
    \end{align*}

    \item If $x,y\in V'$. For $x=v_1$ or $x=v_k$, there are $k-2$ possibilities for $y$; for $x=v_i$, $i=2,\dots,k-1$, there are $k-3$ possibilities for $y$ such that the geodesic between $x$ and $y$ crosses $V'$. Therefore, there are 
    $2(k-2)+(k-2)(k-3)=(k-1)(k-2)$ geodesics in this case.
\end{enumerate}
Adding up the last three equations, we conclude
\begin{align*}
\bB_{V'}&=(N+k-4)(N-k)+(k-1)(k-2)+2\omega_1+2\omega_{k+1}-\sum_{i=1}^{k+1}\omega_i^2\\
&=N(N-2)+k+2(v_1-v_k)-\sum_{i=1}^{k+1}(v_i-v_{i-1}-1)^2\ .
\end{align*}
Differentiating with respect to $v_i$ and equating to 0, we find
\[
v_1=\frac{v_2+1}{2}\ ,\quad v_i=\frac{v_{i-1}+v_{i+1}}{2}\ ,\quad i=2,\dots,k-1\ ,\quad v_k=\frac{v_{k-1}+N}{2}\ .
\]
Therefore, the set that maximizes betweenness is obtained by equally spacing the vaccinated nodes (as much as possible) and assuming fictitious vaccinated individuals at nodes $1$ and $N$, i.e., its distance (measured node by node) is at most one to the set of the same size that, after being vaccinated, minimizes the EFS.

For the closeness, we have
\begin{align*}
(N-k)\bC_{V'}^{-1}&=\frac{v_1(v_1-1)}{2}+\frac{1}{4}\sum_{\begin{subarray}{c} j=1\\ v_{j+1}-v_j\ \text{even}\end{subarray}}^{k-1}(v_{j+1}-v_j)^2\\
&\quad+\frac{1}{4}\sum_{\begin{subarray}{c} j=1\\ v_{j+1}-v_j\ \text{odd}\end{subarray}}^{k-1}\left((v_{j+1}-v_j)^2-1\right)+\frac{(N-v_k+1)(N-v_k)}{2}\\
&=\frac{v_1(v_1-1)}{2}+\frac{1}{4}\sum_{j=1}^{k-1}(v_{j+1}-v_j)^2+\frac{(N-v_k+1)(N-v_k)}{2}-\frac{1}{4}\sum_{\begin{subarray}{c} j=1\\ v_{j+1}-v_j\ \text{odd}\end{subarray}}^{k-1}1\ .
\end{align*}
Due to the summation over odd differences between vaccinated nodes, we could not derive an explicit formula for the maximizing sets. The sum over the square of the distance between vaccinated nodes tends to make the optimal set consist of equally spaced nodes, while the last term tends to make the difference between them odd. This is different from the previous case. In particular, if $N=10$ and $k=2$, closeness is maximized by $\{3,8\}$, while betweenness is maximized by $\{4,7\}$. For $N=20$ and $k=3$, closeness is maximized by $\{3,10,17\}$, $\{4,10,17\}$, $\{4,11,17\}$, and $\{4,11,18\}$, while betweenness is maximized by $\{5,10,15\}$, $\{6,10,15\}$, $\{6,11,15\}$, and $\{6,11,16\}$. Finally, for $N=20$ and $k=4$ the unique maximize of the closeness centrality measure is $\{3,8,13,18\}$, while there are five maximizer for the betweenness: $\{4,8,12,16\}$, $\{5,8,12,16\}$, $\{5,9,12,16\}$, $\{5,9,13,16\}$, and $\{5,9,13,17\}$. Note that while betweenness tends to favor equally spaced nodes, closeness tends to favor equally spaced nodes with odd differences between the position of successive ones.

The maximum degree centrality for subsets of size $k\leq \lfloor N/3\rfloor$ on the path graph $P_N$ is $\bD_{V'}=2 k$. This maximum is achieved for any subset $V’=\{v_1,\dots,v_k\}$ satisfying the constraints: $v_1\geq 2$, $v_i-v_{i-1}\geq 3$ for $i=2,\dots, k$, and $v_k\leq N-1$.  For $k>\lfloor N/3\rfloor$ the maximum degree centrality decreases as $k$ increases. This occurs because the constraints on node selection become tighter, preventing the nodes in the subset  $V’$ from maintaining the optimal spacing required to maximize degree centrality.

Let $\psi_\mu(N)$ be the number of walks of size $\mu$ in a path graph of size $N$. The number of walks that cross any of the nodes $V'=\{v_1,\dots,v_k\}$ is given by 
\[
\phi_\mu(V')=\psi_\mu(N)-\sum_{i=0}^k\psi_\mu(v_{i+1}-v_i-1)
\]
where, again, we used that $v_0=0$ and $v_{k+1}=N+1$. Let $\bG_{V'}^\alpha(N)$ denote the GED-Walk of the subset of nodes $V'\subset\{1,\dots,N\}$ in a path graph of size $N$. Then
\[
\bG_{V'}^{\alpha}(N)=\bG_{\emptyset}^\alpha(N)-\sum_{i=0}^k\bG_\emptyset^\alpha(v_{i+1}-v_i-1)\ .
\]
Let $\mathsf{G}=\bG_\emptyset^\alpha$, and assume that (there is a continuous extension that) $\mathsf{G}$ is differentiable and strictly monotonous.
Differentiating with respect to $v_i$, we find
\[
\frac{\partial\bG_{V'}(N)}{\partial v_i}=-\mathsf{G}'(v_i-v_{i-1}-1)+\mathsf{G}'(v_{i+1}-v_{i}-1)\ .
\]
Equating the previous equation to 0 and from the monotonicity of $\mathsf{G}$, we conclude that 
$\mathsf{G}'(v_i-v_{i-1}-1)=\mathsf{G}'(v_{i+1}-v_{i}-1)$, and the GED-Walk is maximized when (as much as possible) $v_i=\frac{v_{i+1}+v_{i-1}}{2}$.

Therefore, for the path graph, it is evident that betweenness and GED-Walk (among the four centrality measures discussed in Sec.~\ref{sec:basic}) are the most suitable measures for addressing the optimal vaccination problem.

We finish the study of the path graph with an example that clearly shows the difference between optimal spreaders and optimal blockers.

\begin{remark}\label{rmk:spreaders_blockers}
    Consider the path graph $P_3$. The EFS for initially infected nodes at $\{1,2\}$ and $\{2,3\}$ is $2+\beta<2+2\beta-\beta^2$, the EFS when the group of  initially infected nodes is $\{1,3\}$, the optimal spreader. However, the expected EFS is $\frac{1}{3}$ if any group of two nodes is initially vaccinated, implying that all groups of two nodes are optimal blockers.
\end{remark}

\textbf{Cycle graph}:   The cycle graph $C_N$ is defined by $V=\{1,\dots,N\}$ and $E=\{\{i,i+1\},i=1,\dots,N-1\} \cup \{N,1\}$, i.e., the entries of the adjacency matrix are $a_{i,i\pm1}=1$ for $i=1,\dots,N-1$ and $a_{1N}=a_{N1}=1$. We may assume, without loss of generality, that patient zero was at node $1$. Therefore
\begin{align*}
\efs^{\mathrm{cycle}}(N)&=\efs_1^{\mathrm{cycle}}(N)=\sum_{\begin{subarray}{l}m+n\le N-2\\ m,n\ge 0\end{subarray}}(m+n+1)\beta^{m+n}(1-\beta)^2+(1-\beta)N^2\beta^{N-1}+N\beta^N\\
&=(1-\beta)^2\sum_{m=1}^{N-1}m^2\beta^{m-1}+(1-\beta)N^2\beta^{N-1}+N\beta^N\\
&=\frac{1+\beta}{1-\beta}-\frac{\beta^{N-1}}{1-\beta}\left[(N-1)^2\beta^2-(2N^2-2N-1)\beta+N^2\right]\\
&\quad+N^2\beta^{N-1}-N(N-1)\beta^N\ .
\end{align*}

After vaccinating the first node, 
the problem reduces to the path graph $P_{N-1}$, and therefore expressions for the EFS can be obtained from equivalent expressions for the path graph $P_N$. Therefore, the best vaccination strategy for a group of $k>1$ nodes will be to vaccinate as equally spaced as possible. This set maximizes the betweenness and the GED-Walk. The closeness will be also maximized by a similar set, as there are no boundary effects to take into consideration, with the additional constraint that distances between successive nodes should be odd. However, the degree is maximized if no two vaccinated nodes are adjacent. However, maximizing degree centrality does not give a reliable indication of which nodes should be vaccinated to minimize the expected EFS.

\textbf{Star graph}. The star graph $S_N$ 
is defined by a central node $i=1$ connected directly to all other vertices, and those vertices (leaves) have degree 1, i.e., the entries of the adjacency matrix are
$a_{1i}=a_{i1}=1$, for $i=2,\dots,N$ and $a_{ij}=0$, otherwise. 

It is immediate that
\begin{align*}
\efs^{\mathrm{star}}_{1}(N)&=1+(N-1)\beta\ ,\\
\efs^{\mathrm{star}}_{i}(N)&=1+\beta\efs^{\mathrm{star}}_{1}(N-1)=1+\beta+(N-2)\beta^2\ ,\quad i\ne 1\ .
\end{align*}
We conclude
\[
\efs^{\textrm{star}}(N)=\frac{1}{N}\efs_1^{\textrm{star}}(N)+\frac{N-1}{N}\efs^{\textrm{star}}_i(N)=1+\frac{2(N-1)}{N}\beta+\frac{(N-1)(N-2)}{N}\beta^2\\ .
\]

If the set $V'$, with $N-2\ge k=\#V'\ge 1$, is selected to be vaccinated, the expected EFS will be
\begin{align*}
\overline{\efs}^{\mathrm{star},V'}(N)=\left\{
\begin{array}{ll}
&\frac{N-k}{N}\ ,\ \text{if}\ 1\in V'\ ,\\
&\frac{N-k}{N}\efs^{\mathrm{star}}(N-k)>1\ ,\ \text{if}\ 1\not\in V'\ .
\end{array}\right.
\end{align*}
If $k=N-1$, the resulting graph is composed by just one node and the problem is trivial.
Therefore, to minimize the EFS it is necessary to vaccinate the center. On the other hand, it is immediate to check that sets $S_*$ that maximize any given centrality measure will be such that $1\in S_*$. 

For the simple graphs in this section, we showed that the betweenness and GED-Walk stand out as the best proxies of the EFS and, consequently, the optimal vaccination problem. 

Before going to the next section, we finish with one more example:

\textbf{The complete graph}: We could not finish this section without a word on the complete graph, probably the most uninteresting graph, as it is not only homogeneous (i.e., all nodes are equivalent) and isotropic (from any node, all other nodes are equivalent), but also because removing any number of nodes will preserve these two properties. All centrality measures are equivalent and any set of vaccinated nodes will provide the same result. 

Let $p_t^N(s,i)$ be the probability that at time $t$, there are $s$ susceptible and $i$ infectious individuals in a complete graph of size $N$; we say that the population is at state $(s,i)$. We assume that there is one and only one initially infectious individual, i.e., $p_0^N(N-1,1)=1$, $p_0^N(s,i)=0$ otherwise. Therefore
\begin{equation}\label{eq:clique_recursion}
p_{t+1}^N(s,i)=\sum_{k=0}^{N-s-i}p_t^N(s+i,k)\Theta_{s+i,k,i}
\end{equation}
where 
\[
\Theta_{s,i,j}=\binom{s}{j}(1-\beta)^{(s-j)i}\left(1-(1-\beta)^i\right)^j
\]
is the transition probability from state $(s,i)$ to state $(s-j,j)$. Note that $\Theta_{s,i,j}=\mathcal{B}_{j,s}(1-(1-\beta)^i)$, where $\mathcal{B}$ denotes the family of Bernstein polynomials~\cite{GzylPalacios}.

The outbreak will be finished at most at time $t=N$, therefore $\efs^{\mathrm{complete}}(N)\bydef N-\sum_{s=0}^{N-1}sp_N^{N}(s,0)$. Furthermore, $\overline{\efs}^{\mathrm{complete},V'}(N)=\frac{N-\#V'}{N}\efs^{\mathrm{complete}}(N-\#V')$.

It is clear that all sets of the same size are equivalent, and therefore the optimal vaccination problem in the case of the complete graph is trivial, see Table~\ref{table:severity_clique}.

\begin{table}
\centering
\begin{tabular}{c|cp{15cm}}
$N$&$\efs^{\mathrm{complete}}(N)$\\
\hline
$1$& $1$\\
$2$& $\beta+1$\\
$3$& $-2\beta^3 + 2\beta^2 + 2\beta + 1$\\
$4$& $-6\beta^6 + 21\beta^5 - 21\beta^4 + 6\beta^2 + 3\beta + 1$\\
$5$& $24\beta^{10} - 168\beta^9 + 480\beta^8 - 700\beta^7 + 508\beta^6 - 108\beta^5 - 60\beta^4 + 12\beta^3 + 12\beta^2 + 4\beta + 1$\\
\end{tabular}
\caption{EFS for the complete graph of size $N$. If $\beta=1$, $\efs^{\mathrm{complete}}(N)=N$; if $\beta=0$, $\efs^{\mathrm{complete}}(N)=1$.}
\label{table:severity_clique}
\end{table}

\section{Epidemic final size and centrality measures: numerical results}
\label{sec:numerical}

In this section, we consider more general network topologies for which exact results are not possible. Given a network $G=(V,E)$, we assume a single initially infectious individual is located at a randomly chosen node, with all nodes being equally likely, and denote $V'$ as the set of vaccinated individuals. We 
will estimate the expected epidemic final size, $\overline{\efs}^{G,V'}(N)$, analyze its relationship with the number and location of the vaccinated individuals in  $V'$, and explore its correlation with the centrality measures of set $V'$. 

We start recapitulating the procedure introduced in~\cite{Kempe_Tardos_2003} and extensively used in simulations of diffusion processes in social networks. We are not aware of the use of this particular approach in the study of outbreaks in networks and therefore we will briefly explain it. 

Assume that no node was vaccinated (i.e., $V'=\emptyset$), and that each infectious node has exactly one opportunity to infect its neighbors, cf. Lemma~\ref{lem:rescaling}. The central idea of the procedure is to change the focus of the simulation from the nodes to the edges. Consider that each edge acts as a \emph{gate}, which can either be open (transmitting the pathogen) with probability $\beta\in(0,1]$, or closed, with probability $1-\beta$. This will generate an \emph{effective network} $G'=(V,E')$, where $E'\subset E$ is the set of the ``open gate edges''. Each effective network $G'$ can be viewed as a specific realization of the epidemic dynamics with an associated probability of $\beta^{\#E'}(1-\beta)^{\#E-\#E'}$. For each $v\in V$ as the patient zero, the number of nodes reachable from $v$ in the effective network, i.e., the size of the connected component of $G'$ that includes $v$, $\mathcal{C}_{G'}(v)$, is the number of infected individuals in this specific realization of the epidemic spread (see Fig.~\ref{fig:algol_a}).

\begin{figure}
\centering
    \includegraphics[width=0.7\textwidth]{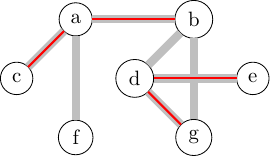}
\caption{Illustration of the EFS resulting from a specific realization of an epidemic dynamic, with transmission probability $\beta$ in a network $G=(V,E)$, with 
$V=\{a,b,c,d,e,f,g\}$ and $E$ the set of the grey edges.
From the original set of edges $E$, a subset of the red edges, $E'$, was selected, creating the effective 
network $G'=(V,E')$, with an associated probability of $\beta^4(1-\beta)^3$. 
If the \emph{patient zero} is located at nodes (a), (b), or (c), the EFS is 3. The same holds for patient zero at 
nodes (d), (e), or (g). If the outbreak starts at node (f), the EFS is 1.}
\label{fig:algol_a}
\end{figure}

The implemented procedure operates as follows. At the end of iteration $k$, $k$ patient zero nodes have been selected (with repetition) from $G$, the effective networks $G'_0, \dots, G'_{\ell}$ have been generated, and the set $F$ contains the $k$ EFS values corresponding to each selected node in the associated effective networks.
In iteration $k+1$, a patient zero node, $v_{k+1}$, is randomly selected from the $N$ nodes of $G$. If $v_{k+1}$ has already served as patient zero in all effective networks $G'_0, \dots, G'_{\ell}$, a new effective network, $G'_{{\ell}+1}$, is generated. Node $v_{k+1}$ is recorded as a patient zero in $G'_{{\ell}+1}$. The connected components of $G'_{{\ell}+1}$ are identified, and the EFS for each node $u \in V$ is set as the size of the component $\mathcal{C}_{G'_{{\ell}+1}}(u)$ to which $u$ belongs. The EFS value for node $v_{k+1}$ is included in $F$.

Alternatively, if $v_{k+1}$ has not yet served as patient zero in some effective network among $G'_0, \dots, G'_{\ell}$, the patient zero record of exactly one of these networks is updated to include $v_{k+1}$, and the size of the component of that network to which $v_{k+1}$ belongs, computed in a previous iteration, is included in $F$. The procedure then proceeds to the next iteration and continues until a target number of iterations, $k_*$, is reached. The average of the $k_*$ values in $F$ is treated as a sample of the EFS. The procedure is repeated $K_*$ times, generating $K_*$ sample values, which are then used to estimate the expected EFS of the network $G$, ${\efs}^{G}(N)$, and to compute the corresponding standard deviation.
In the computational experiments, we set $k_*=1000$ and $K_*=100$.

When applied to networks $G=(V,E)$ that include vaccinated nodes $V'\not=\emptyset$, the procedure is similar but incorporates the following modification. If a vaccinated node $v\in V'$ is selected at the start of new iteration, its EFS value is set to 0, included in $F$, and the algorithm proceeds directly to the next iteration. In all other respects, the algorithm operates on the network $G[V\setminus V']$, which is derived from the original network $G$ by removing the vaccinated nodes $V’$ and all edges incident to them, rather than working directly with $G$.

The proposed algorithm minimizes computational effort by generating new effective networks only when necessary and reusing the same effective network to compute the EFS for all possible initial conditions. Specifically, all the connected components identified to calculate the EFS for a patient zero node selected in one iteration can subsequently be used to determine the EFS for patient zero nodes in future iterations thereby avoiding redundant computations. As a result, the total number of effective networks considered is equal to the maximum number of times any single node is selected as patient zero throughout the algorithm.

In Figs.~\ref{fig:EFS_centrality_SW1} and~\ref{fig:EFS_centrality_Barabasi}, we show how the EFS correlates with several centrality measures in scale-free networks. The group centrality measures were computed using the open-source Python package NetworKit~\cite{Staudt,Angriman2022B}.

Fig.~\ref{fig:EFS_centrality_SW1} considers a ring of $N=100$ nodes, each node is connected to its $\nu=4$ closest neighbors, and for each node and each link, there is a probability $p=0.3$ (rewiring probability) that the link is erased and replaced by a new link, connecting the given node and a randomly selected new node.  This is a particular realization of the Watts-Strogatz network~\cite{Watts_Strogatz_1998}. In Fig.~\ref{fig:EFS_centrality_Barabasi}, we use the Barab\'asi-Albert model, where we steadily increase the number of nodes from $m=1$ to $N=100$ such that each new node is attached to one of the existing nodes, with higher probability to nodes with high number of links (preferential attachment) creating a small number of nodes with high degree (hubs). 

From Figs.~\ref{fig:EFS_centrality_SW1} and~\ref{fig:EFS_centrality_Barabasi}, we observe that centrality measures serve as effective predictors of the epidemic final size. Among these, the GED-Walk centrality and betweenness centrality provide the most reliable results, with GED-Walk being the better predictor. Even though we used a specific functional dependence to fit the data, we do not claim that this choice is robust; that would require further investigation. However, the results clearly demonstrate that the impact of vaccinating any group of individuals can be effectively evaluated based on their centrality, with the most accurate assessments achieved when using the GED-Walk centrality.

\begin{figure}
\centering
    \includegraphics[width=0.48\textwidth]{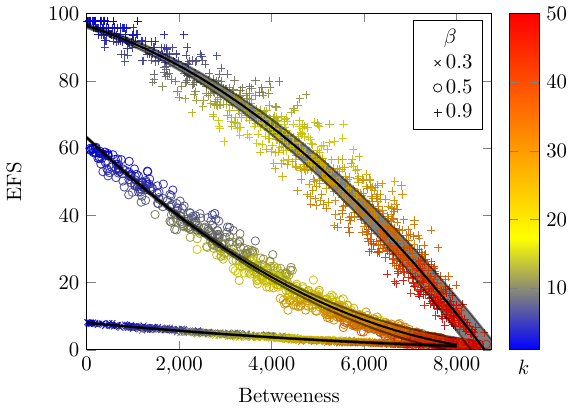}
    \includegraphics[width=0.48\textwidth]{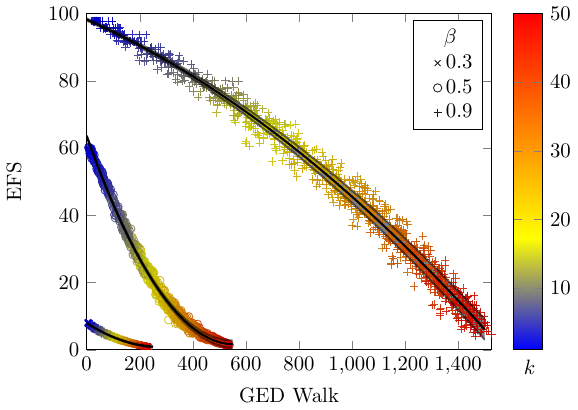}\\
    \includegraphics[width=0.48\textwidth]{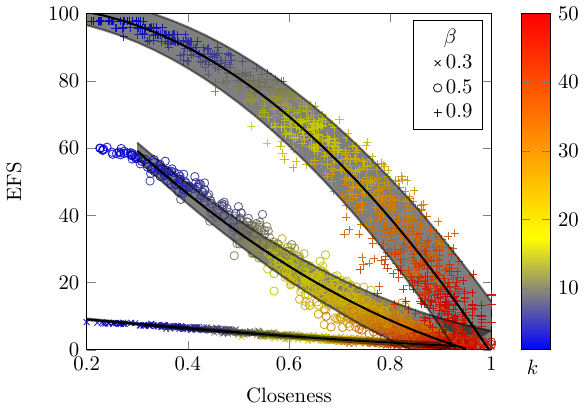}
    \includegraphics[width=0.48\textwidth]{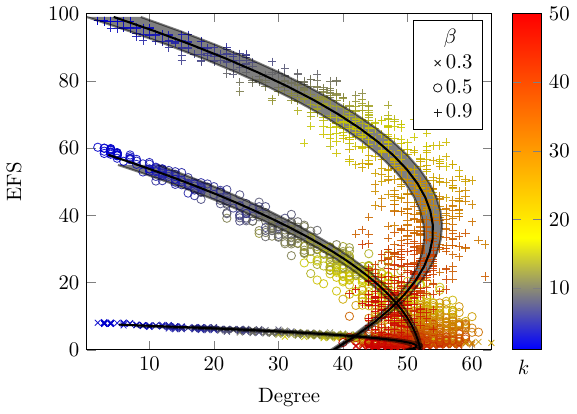}
    \caption{The four panels depict the relationship between the expected epidemic final size (EFS) and various centrality measures for groups of vaccinated individuals on a Watts-Strogatz network with $N=100$ nodes. The network was generated using parameters $\nu=4$ and a rewiring probability $p=0.3$. The results were obtained by considering 1000 groups of vaccinated individuals, with cardinalities $k=1,2,\dots, 50$. The expected EFS, computed for each subset $V'$, is plotted on the vertical axis of the four panels. Markers $\times, \circ, \scriptstyle{+}$ indicate the EFS for transmission probabilities $\beta=0.3, 0.5$, and 0.9, respectively. Consequently, the expected EFS for each vaccinated set $V'$ is displayed three times in each panel, corresponding to the three transmission probabilities. The group centrality measures for each subset $V'$ are plotted on the horizontal axis of the panels as follows: betweenness centrality in the upper left panel, GED-Walk centrality in the upper right panel, closeness centrality in the bottom left panel, and degree centrality in the bottom right panel.
The color coding in the panels represents the cardinality $k$ of the set of vaccinated nodes. 
We observe that the GED-Walk centrality exhibits the highest predictive value, as indicated by its smaller dispersion around the best-fit curve (solid black lines). The best fit was obtained using a quadratic function $f$, $f(x)\to x$ in the case of the Degree, and $x\to f(x)$ in the others. The gray area indicates the error bar in the best-fit quadratic functions. Other functions could give better approximations for a given set of data, but the quadratic function shows a good balance between complexity and precision in all data at the same time, cf. also Fig.~\ref{fig:EFS_centrality_Barabasi}.}
    \label{fig:EFS_centrality_SW1}
\end{figure}

\begin{figure}
\centering
    \includegraphics[width=0.48\textwidth]{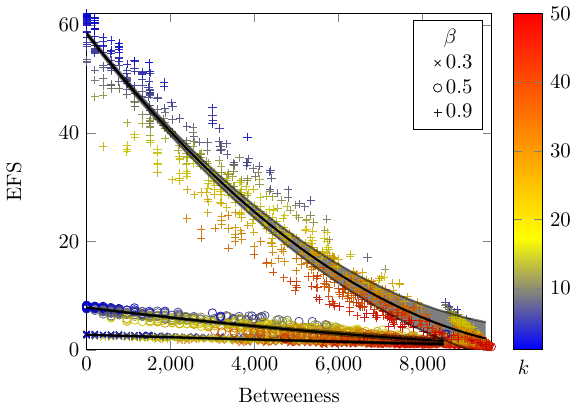}
    \includegraphics[width=0.48\textwidth]{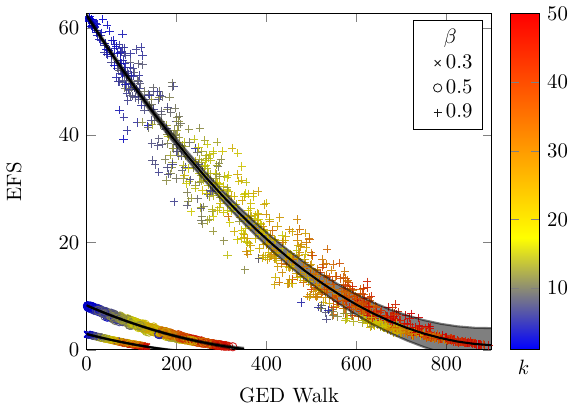}\\
    \includegraphics[width=0.48\textwidth]{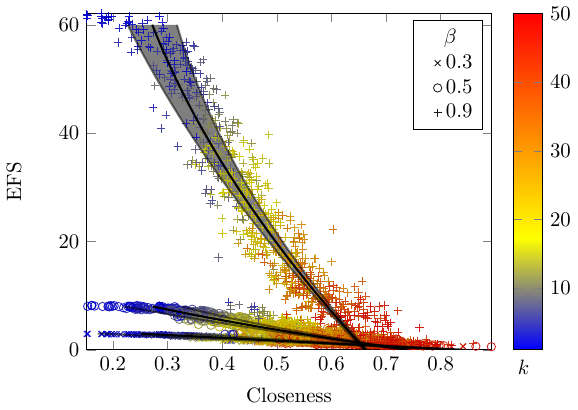}
    \includegraphics[width=0.48\textwidth]{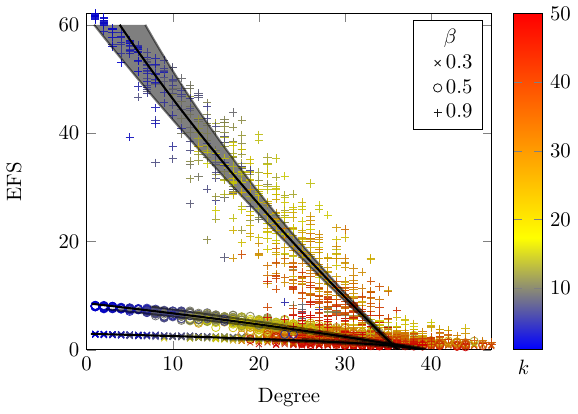}
    \caption{The four panels illustrate the relationship between the expected EFS and various centrality measures for 1000 groups of vaccinated individuals on a Barab\'asi-Albert network with $N=100$ nodes, generated with the parameter $m=1$. The markers, values on the axes, colors, and overall interpretation are consistent with those in Fig.~\ref{fig:EFS_centrality_SW1}. As in Fig.~\ref{fig:EFS_centrality_SW1}, the GED-Walk centrality exhibits the highest predictive value. The best fit was determined using the same functional form as in Fig.~\ref{fig:EFS_centrality_SW1}.
    }
    \label{fig:EFS_centrality_Barabasi}
\end{figure}

\section{Optimal sets}
\label{sec:annealing}
After establishing that centrality measures can serve as proxies for the EFS, this section focuses on selecting optimal subsets of nodes to vaccinate. 
We compare the expected EFS when the $k$ nodes to vaccinate are chosen to maximize each of the four centrality measures under consideration.
To identify the optimal groups of cardinality $k$ with respect to betweenness, GED-Walk, closeness, and degree, we used the Python package NetworKit~\cite{Staudt}.
 These results are then contrasted with the solutions obtained using a simulated annealing algorithm, designed to identify subsets of cardinality $k$ that minimize the expected EFS.

The simulated annealing (SA) algorithm, introduced in~\cite{Kirkpatrick}, is a general heuristic method designed to approximate global optima which is widely used in combinatorial optimization, cf.~\cite{Laarhoven}. At each iteration $i$, the SA algorithm randomly selects a \emph{neighboring} solution $S'$ of the current feasible solution $S$ and probabilistically decides whether to retain $S$ or replace it with $S'$. When minimizing an objective function $\pi$, the change $\Delta_{i}=\pi(S') - \pi(S)$ is computed. If $\Delta_i<0$ (indicating that $S'$ is a better solution), $S'$ replaces $S$ as the current solution. If $\Delta_i\geq 0$,  $S'$ may still be accepted as the new solution with a probability $e^{-\Delta_i/T_i}$, where $T_i>0$ is the temperature at iteration $i$. The temperature $T_i$ decreases gradually to zero according to a predefined cooling schedule.

In our implementation, feasible solutions consist of all subsets of $k$ nodes from the network $G=(V,E)$. A neighbor of a set $S$ is defined as any set obtained by replacing one node in $S$ with a node from $V\setminus S$. For every subset $X\subset V$ of nodes in the network $G$, the objective value $\pi(X)$ is the estimated expected epidemic final size, $\overline{\efs}^{G,X}(N)$, when the nodes in $X$ are vaccinated. This estimate is computed using the procedure described in Section \ref{sec:numerical}. To manage the temperature, we initialize $T_1=1$ and gradually decrease its value by 0.05\% every 11 iterations until it reaches a final temperature of $10^{-3}$. Starting from each initial set of $k$ nodes, the algorithm is run for 1500 iterations, with the best subset of nodes encountered at any stage retained as the final output of that run. The algorithm is executed 11 times using different initial solutions. Ten of these initial solutions consist of sets of $k$ nodes randomly selected from the $N$ nodes in the network. The remaining initial solution is the set of $k$ nodes that, among all those previously considered in our comparative analysis (randomly selected nodes, top $k$ nodes by centrality, or groups of $k$ nodes maximizing group centralities), yielded the lowest estimated EFS when vaccinated. 
The final solution obtained by the SA algorithm is the best subset of $k$ nodes among the solutions from all 11 runs.

When analyzing the results, it is important to note that optimality is not guaranteed for either the group centrality outcomes produced by NetworKit or the solutions obtained using the SA algorithm. 

Fig.~\ref{fig:SimAnneal_a} presents the estimated expected EFS on the Watts-Strogatz network from Fig.~\ref{fig:EFS_centrality_SW1}, for the sets of nodes selected to maximize group betweenness, GED-Walk, closeness and degree centralities, alongside the results  obtained from the SA algorithm. Additionally, it includes the average expected EFS among 20 randomly selected sets of nodes of cardinality $k$ ($k=1,\dots,50)$. These averages are represented by black points in Fig.~\ref{fig:SimAnneal_a}, positioned midway between two horizontal black bars, which indicate the corresponding error bars. The estimates were computed using the procedure described in Section \ref{sec:numerical}.
\begin{figure}
    \centering%
\includegraphics[width=0.33\textwidth]{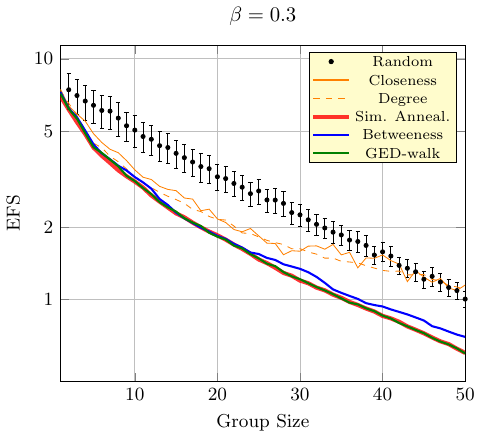}
\includegraphics[width=0.33\textwidth]{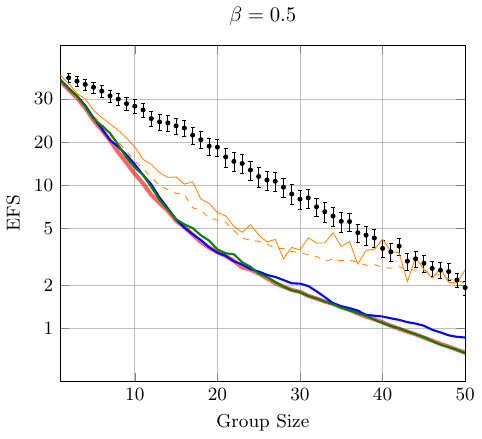}
\includegraphics[width=0.33\textwidth]{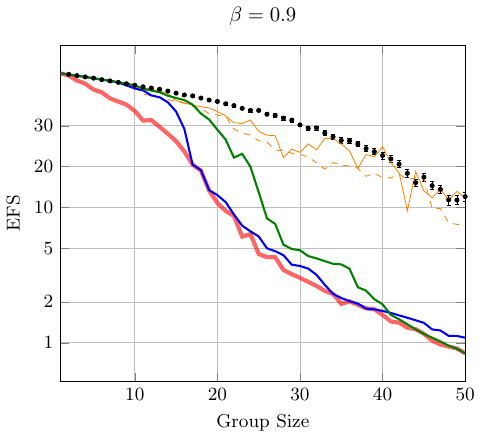}\\
\includegraphics[width=0.33\textwidth]{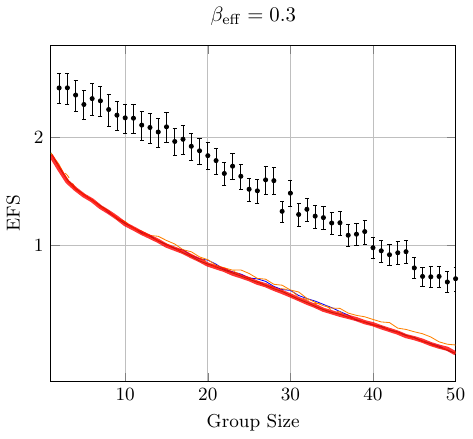}
\includegraphics[width=0.33\textwidth]{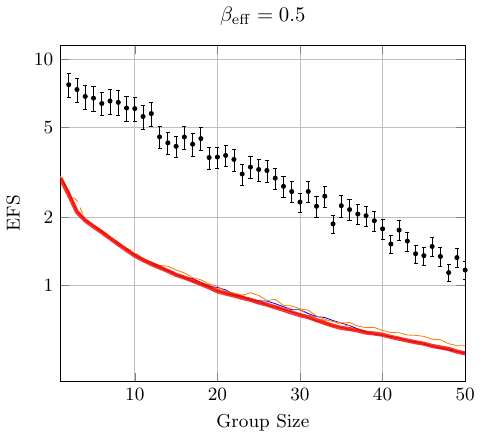}
\includegraphics[width=0.33\textwidth]{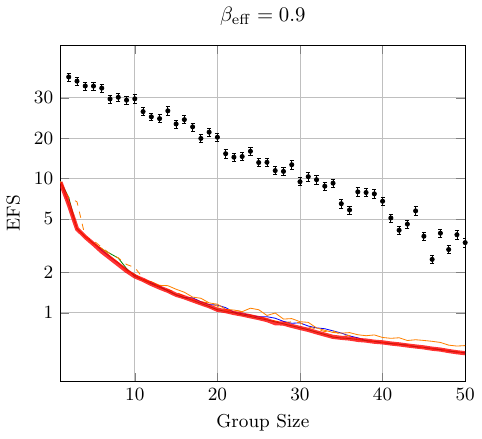}
\caption{
The estimated expected EFS (vertical axis) for varying set sizes (horizontal axis) of vaccinated nodes, including the sets of $k$ nodes that maximize group betweenness, GED-Walk, closeness and degree centralities, as well as those obtained using the SA algorithm. Additionally, it includes the average estimated expected EFS among 20 randomly selected sets of nodes of cardinality $k$. Each average is represented by a back point positioned midway between two horizontal black bars, which indicate the corresponding error bar. These results were obtained for the Watts-Strogatz network from Fig.~\ref{fig:EFS_centrality_SW1} (line above), and for the Barabasi-Albert network from Fig.~\ref{fig:EFS_centrality_Barabasi} (line below), under transmission probabilities $\beta=0.3$ (left), $0.5$ (center) and $0.9$ (right). }
\label{fig:SimAnneal_a}
\end{figure}

The results indicate that the estimated expected EFS for groups maximizing betweenness and GED-Walk centralities are lower than those for groups based on closeness and degree centralities, significantly outperform randomly selected groups, and are comparable to the outcomes achieved using the SA algorithm. However, when the transmission probability $\beta$ is close to 1 and the group size of vaccinees is approximately $1/4$ to $1/3$ of the total population, betweenness emerges as the centrality measure that identifies the most effective group of vaccinees for minimizing the expected EFS. Notably, this range is particularly critical in scenarios with limited vaccination resources.

The results in the first line of Fig.~\ref{fig:SimAnneal_a} were derived using a Watts-Strogatz (small-world) network with $N=100$, $\nu=4$, and a rewiring probability of $p=0.3$. A second realization of the Watts-Strogatz network was not included in this study, as the results were consistent with those presented. Results for the Barab\'asi-Albert graph were included in the second line, showing consistent behavior.

\section{Evaluating method performance on real network data}
\label{sec:USMap}

The studies presented in the previous two sections, which explore the use of group centrality measures as predictors of the EFS in node vaccination strategies, have focused on contact networks that are standard synthetic random networks designed to replicate key structural features commonly observed in real-world systems.

In this section, we will perform a similar study in a real-world network, the United States domestic air transportation system (data available at~\cite{Pajek}, see also~\cite{Wei_2022}).

In this network, nodes are not individuals, and links are not disease transmission routes, but civilian airports and commercial air links, both unweighted. Despite the lack of direct epidemiological interpretation of using this particular network, there are some advantages: (i) There is no explicit information about how the network was constructed, which makes it a neutral test case; thus, the validity of our assertions offers a robust proof of concept for our theoretical framework. (ii) The network allows for straightforward interpretation of results, as the meaning of each node is clear, something that would be far more difficult in a large social media contact network.

In this network, the 332 commercial airports in the U.S. are considered nodes of a network, and links represent the existence of direct flights between two nodes. The results presented here do not take into account the traffic in the links or the importance of each airport, measured by, e.g., the number of passengers. The reasons for not using weighted networks will be discussed in the conclusions. See Fig.~\ref{fig:USMap}.

In Fig.~\ref{fig:EFS_centrality_USAir} we show the correlation between the EFS and the four group centralities used in this work, showing, once more that the GED-Walk is the most reliable one for EFS predictions based on the vaccinated nodes, when compared with the four centrality measures discussed before.

In Fig.~\ref{fig:SimAnneal_USAir} we compare the EFS of the best groups with respect to the four basic centrality measures; the only important difference with respect to the networks studied in Sec.~\ref{sec:numerical} is that the best groups with respect to betweenness and GED-walk have comparable EFS, within numerical errors (not presented). These two measures are close to the optimum value, as validated by the simulated annealing method, and significantly outperform the closeness and degree measures.

One possible reason for that resides in the structure of the network, as a dense air transportation network will provide routes with few steps between almost any destinations. In this sense, best groups for betweenness and GED-walk are similar, reaching a maximum of 90\% of common nodes around $k=30$ (when the similarity between EFS of both optimal groups is maximal) and then stabilizes around 75\% of common nodes when $k$ approaches 147 (circa 50\% of the total number of nodes of the network).

We finish this example by discussing from an intuitive point of view the contribution of each node to the group betweenness. For each size $k$, groups of airports of that size are considered, and the group of $k$ best nodes and the best group of $k$ nodes are determined based on each of the four centrality measures discussed in Sec.~\ref{sec:basic}. When increasing the value of $k$, the
first significant difference between the four centrality measures discussed in the present work is observed at size $k=12$, where the EFS of the set of twelve nodes that maximize the group betweenness is smaller (considering the error bar) than any groups of equal or smaller size that optimize any of the other four centrality measures.
Denver International and Salt Lake City International are among the airports in the selected group of size 12 with high betweenness, yet they do not appear among the top twelve airports when ranked individually by this centrality measure.

The inclusion of Anchorage International Airport (Alaska), the 59th busiest U.S. airport by total passenger boardings, is attributed to the disregard of passenger flow at each node and link. Its centrality comes from the fact that Alaska heavily relies on air transportation, has a large number of airports, and almost all air connections outside Alaska go through Anchorage Intl. Airport.

As a consequence of the limitation of this study to domestic flights, some of the largest U.S. airports, such as Los Angeles Intl. and JFK (New York), 4th and 6th busiest U.S. airports, respectively, were not selected by the centrality measures.

\begin{figure}
    \centering
    \includegraphics[width=1\linewidth]{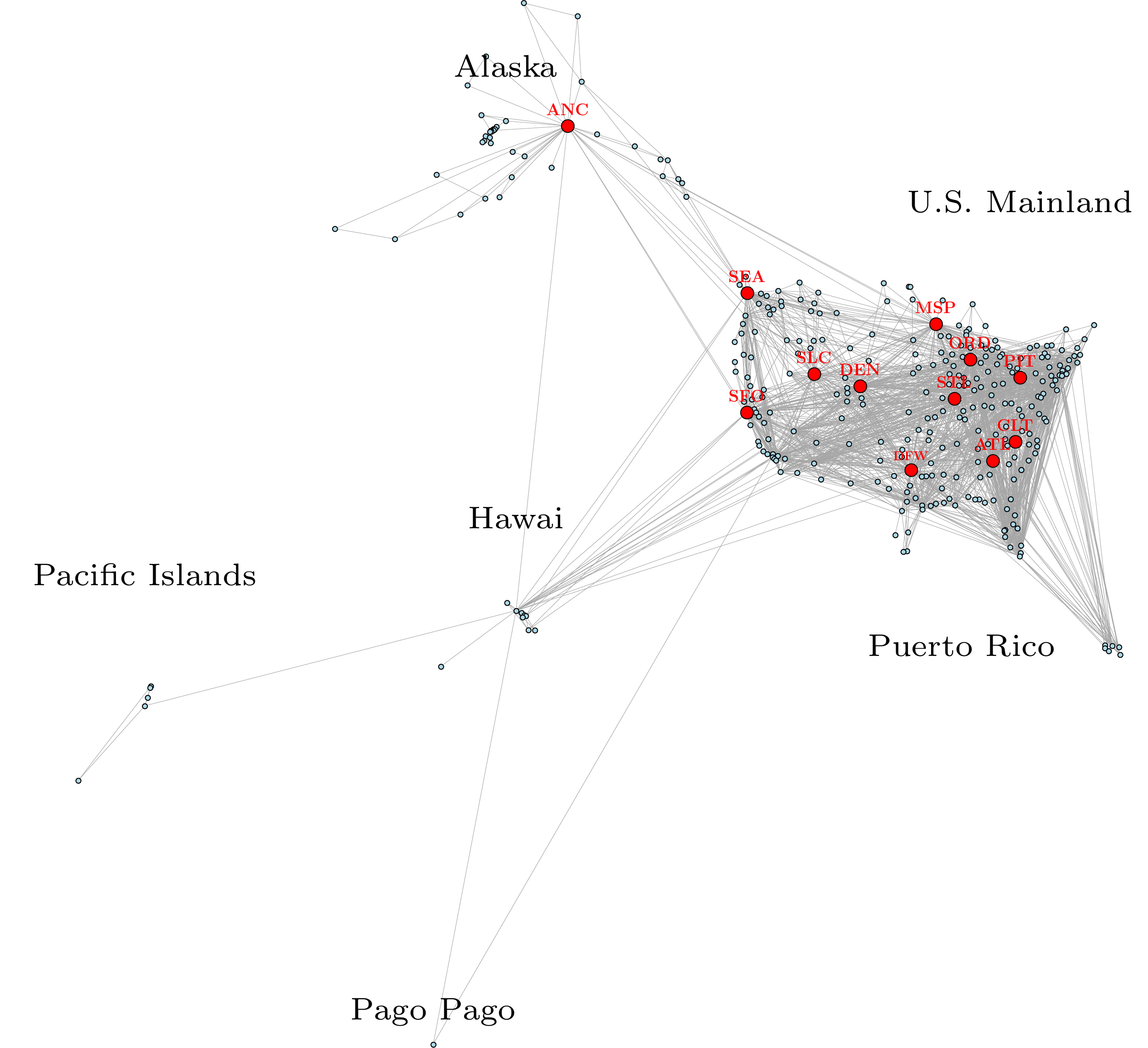}
      \caption{USA map (including overseas territories) with all commercial airports and links indicating the existence of a commercial flight. The group composed of Anchorage Intl. (ANC), Seattle-Tacoma Intl. (SEA), Minneapolis-St Paul Intl. (MSP), Chicago O'hare Intl. (ORD), Salt Lake City Intl. (SLC), Pittsburgh Intll. (PIT), Denver International Airport Stapleton Intl. (DEN), Lambert-St Louis Intl. (STL), San Francisco Intl. (SFO), Charlotte/Douglas Intl. (CLT), The William B. Hartsfield Atlanta (ATL), Dallas/Fort Worth Intl. (DFW) is the group with largest betweenness. For $\beff=0.5$, the EFS is $107\pm 4$, smaller than any other group composed of groups of larger centrality or groups made by nodes of larger centralities, for all centralities studied in the present work and size up to 12.}
    \label{fig:USMap}
\end{figure}

\begin{figure}
 \centering
    \includegraphics[width=0.48\textwidth]{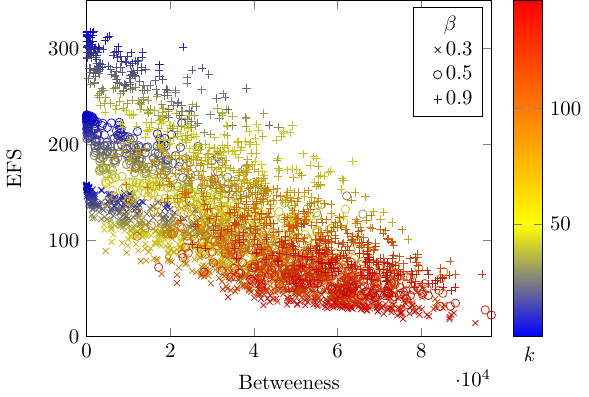}
     \includegraphics[width=0.48\textwidth]{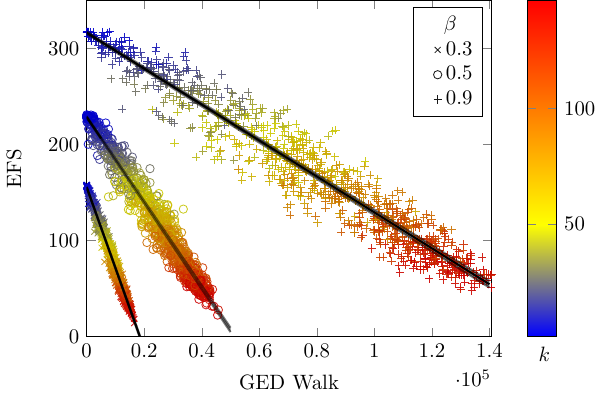}\\
     \includegraphics[width=0.48\textwidth]{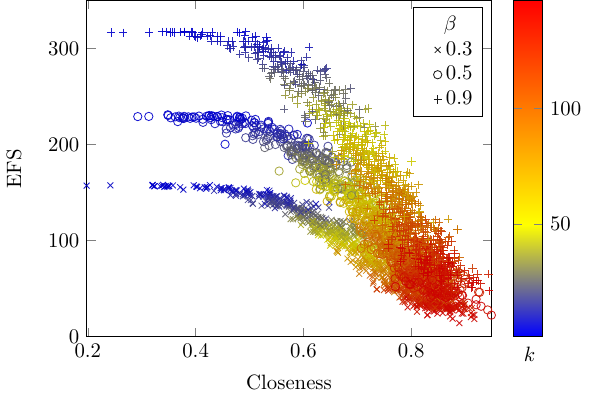}
     \includegraphics[width=0.48\textwidth]{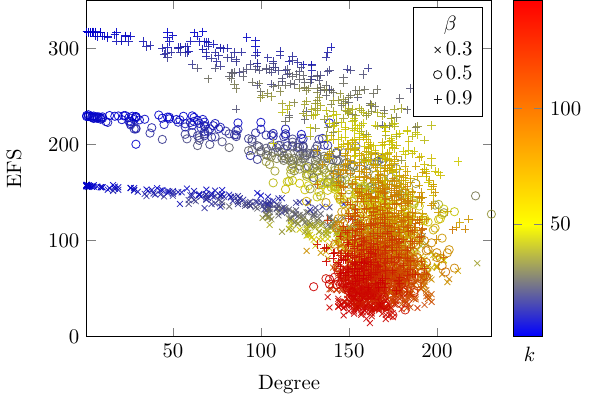}
     \caption{The four panels illustrate the relationship between the expected EFS and various centrality measures for 1000 groups, with cardinalities from size $k=3$ to $k=147$, always in multiples of 3, for the map of the US domestic transportation map discussed in Fig.~\ref{fig:USMap}. The markers, values on the axes, colors, and overall interpretation are consistent with those in Fig.~\ref{fig:EFS_centrality_SW1}. As in Fig.~\ref{fig:EFS_centrality_SW1}, the GED-Walk centrality exhibits the highest predictive value for the EFS. We showed a linear best fit in the GED case and did not provide any specific function for the best fit in the other cases.
     }
    \label{fig:EFS_centrality_USAir}
\end{figure}

\begin{figure}
 \centering
\includegraphics[width=0.33\textwidth]{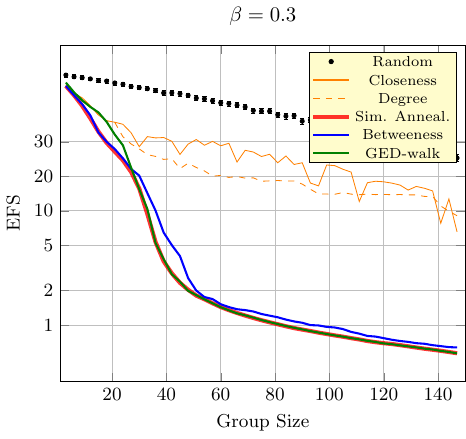}
\includegraphics[width=0.33\textwidth]{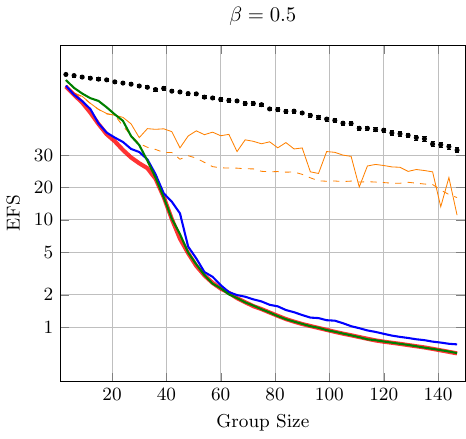}
\includegraphics[width=0.33\textwidth]{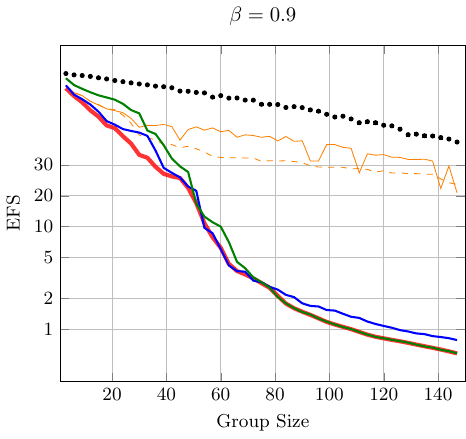}
\caption{
Results are presented in the same format as in Fig.~\ref{fig:SimAnneal_a}, but applied to the U.S. domestic air transportation network shown in Fig.~\ref{fig:USMap}. The findings are consistent: groups of a given size with higher betweenness centrality or GED-Walk centrality act as more effective blockers of epidemic spread when vaccinated. This conclusion is further supported by comparison with the simulated annealing method.
Groups were considered from size $k=3$ to $k=147$, always in multiples of 3.}
\label{fig:SimAnneal_USAir}
\end{figure}

\section{Discussion, limitations, and future work}
\label{sec:discussion}

This research demonstrates the importance of understanding the structure of contact networks in order to grasp how epidemics spread in populations. 
Furthermore, it suggests that centrality measures can serve as valuable predictors of outbreak magnitude across diverse population networks.

Two innovative concepts were introduced in the field of epidemic modeling:

\begin{enumerate}
\item\label{conc:point1} 
We assume and largely accept that public policies do not aim exclusively, or even prioritize, to vaccinate optimal groups, but instead to protect vulnerable individuals even when their behavior has a limited impact on the spread of disease. This is particularly important when vaccines are scarce. Therefore, we need to mathematically model the impact of vaccinating non-optimal subgroups. Our research has shown, using simple epidemic models and networks, that centrality measures can be used to estimate the impact of vaccination. We have found evidence that the GED-Walk is a suitable measure to be used in these situations, which seems to be a novel idea in mathematical epidemiology.
\item\label{conc:point2} 
Building on this, we can provide a fresh perspective on the classical question of optimal vaccination. Specifically, we demonstrate that the betweenness is the best proxy to identify optimal groups. We have tested this claim using the simulated annealing method to optimize (minimize) the final size of the epidemic in some networks. It has become clear that selecting the best group of individuals to be vaccinated is not the same as selecting the best individuals. While this may seem simple, as far as we know, it has never been explicitly stated in the mathematical epidemiology literature. Furthermore, to the best of our knowledge, the simulated annealing algorithm has never been used to validate optimal and near-optimal solutions in optimal vaccination problems.
\end{enumerate}

In more direct words, our main conclusion is that if one wants to find the optimal group of a given size to maximally reduce the outbreak size, one should use the betweenness centrality measure. If one wants to estimate the impact of vaccinating a given group of individuals compared to other groups, neither of them is optimal, then GED-Walk is the most indicated centrality measure. From a topological point of view, this means that close to optimality, the relevance of the geodesics, when compared with all possible paths linking two different nodes, is maximal. In particular, this indicates that optimal vaccination is not concerned with breaking the largest number of transmission routes, but interrupting geodesic ones.

The conclusions presented above need to be validated across a broader range of networks, and their robustness can only be established if they continue to hold under more general epidemic models. We hope to be able to present more examples in the near future.
 
A more general modeling approach that could achieve similar objectives as the method described in point~\ref{conc:point1} is to consider weighted networks. This involves assigning weights to the links (representing a higher or lower probability of disease transmission) and/or the nodes. The latter case, although less commonly found in the literature, would reflect factors such as increased need for medical care, development of severe forms of the disease, or even the probability of death of the individual represented by the node. For the former, we stress that the algorithm presented at Sec.~\ref{sec:numerical} to simulate the EFS can be directly adapted to allow different weights for the links. Weighted networks leads us naturally into the realm of weighted centrality measures, a new and still-developing concept that could be used in the future to address vaccination challenges~\cite{kang2024notiongraphcentralitybased}.

We utilized various classical graphs in our study, not to represent a specific real population, but as a proof of concept regarding the relationship between centrality measures and EFS. However, our epidemic model was limited to cases where immunity, whether from the disease or vaccination, is permanent and does not diminish over time. In addition, an immune individual neither contracts nor transmits the disease. Consequently, a pathogen's path through a population cannot revisit the same node. This implies that a valuable centrality measure could be based on the concept of a self-avoiding walk (SAW), which was originally introduced in the study of polymer growth~\cite{Montroll_JCM1950}; cf. also~\cite{ChalubRiera_JPA1997} for its use in scale-free networks (fractals). However, there is currently no efficient algorithm for obtaining centrality measures based on SAWs. In cases where immunity wanes, it is expected that something in between the GED- and the SAW-based centrality measures will play a prominent role.

Several networks based on real data can be found in~\cite{snapnets}, which may be useful for future studies to further validate our results. Currently, these networks are too large to allow for direct simulation, so new methods need to be developed. However, if the link between group centrality and EFS is validated, estimating the latter based on the former could significantly reduce the need for computational resources.

Another limitation of the model used in the present work is that it does not include pathogen variants. However, this can be easily incorporated. As an extension of our current work, we may also consider different centrality measures, such as a generalization for groups of nodes using the Tukey depth~\cite{Cerdeira_AMC21}, or the ``lobby index,''~\cite{Campiteli_PA2013}; see also~\cite{Berahmand_2018} for another centrality measure.

One important question that was not addressed in this work is the evolution of the network topology due to factors unrelated to the epidemic (external influences) and factors related to the epidemic (e.g., confinements and quarantines) that could be imposed by external agents or undertaken voluntarily. The last point naturally leads to the interaction between epidemic dynamics and human behavior. This scenario could be modeled by activating or deactivating certain links in the network or by varying willingness to be vaccinated. The relationship between epidemic and game theory (the branch of mathematics that models strategic behavior) has been studied since the influential work~\cite{Bauch_2004}, as discussed in the review~\cite{Wang_2016}. Previous works by one of the authors have also examined seasonal epidemics~\cite{Doutoretal_JMB2016} and childhood diseases~\cite{Chalubetal_MB2024} in the realm of voluntary vaccinations.  An alternative modeling approach to a similar problem is to consider two-layered diffusion, one for the pathogen, but a second associated with the diffusion of information, that could be the willingness of a given individual to be vaccinated or the adoption of prophylactic measures (e.g., the use of masks), cf.~\cite{Guo2021,Gross2008}.

In our work, we have not addressed the concept of the basic reproduction number $\mathcal{R}_0$ and its estimation in the simulations presented. Refer to the discussions in~\cite{kiss2017mathematics,DiLauro2020,Pastor_2015} for more information on $\mathcal{R}_0$ in networks. Instead, our focus has been on the parameters $\beta$ and $\gamma$ (or $\beff$). This choice is motivated by several reasons: i) our initial inspiration was the IMP problem, where $\mathcal{R}_0$ is not discussed; ii) $\beff$ is more relevant for our numerical simulations compared to $\mathcal{R}_0$; iii) our findings are not specific to any particular disease; iv) there is no universally accepted translation between the parameters $\beta$, $\gamma$, the network topology, and $\mathcal{R}_0$ that we could leverage.

The networks used in this article represent small populations and should be seen as a starting point for understanding the dynamics and the impact of vaccinations in larger, more realistic networks. When we scale up the number of nodes and/or links, we cannot keep $\beta$ fixed. For instance, in a complete graph with $N$ nodes and a fixed $\beta$, the number of paths connecting two different nodes grows exponentially with $N$. Therefore, for a fixed $\beta$, there is a sufficiently large $N$ for which an initially infected individual will infect the entire network with probability 1. Following the discussion in Rmk.~\ref{rmk:betagamma}, it is natural to consider that the meaning of the parameters $\beta$ and $\gamma$ changes from a finite population, where it is interpreted as a probability to change state, to continuous modeling, where it is more natural to be considered as a rate of change. To find a suitable model in $N\to\infty$ limit, it is necessary to rescale $\beta$ with $N$ in a manner that ensures the limiting model of a complete graph as $N$ approaches infinity is compatible with the differential equation model, as discussed in~\cite{ChalubSouza14,ChalubCorralGarciaRodrigez_2023}. This is why we do not present results on networks of Erd\H os-R\'enyi type nor compare our results with the outbreak dynamics in complete graphs.

 \section*{Acknowledgement}
All the authors are funded by Portuguese national funds through the FCT – Funda\c{c}\~ao para a Ci\^encia e a Tecnologia, I.P., under the scope of the project \hfill UIDB/00297/2020\\ (https://doi.org/10.54499/UIDB/00297/2020) and\hfill UIDP/00297/2020 (https://doi.org/10.54499/UIDP/00297/2020) (Center for Mathematics and Applications --- NOVA Math). All the authors thank Max O. Souza (Universidade Federal Fluminense, Brazil and NOVA Math, Portugal) for interesting discussions on the present work. FACCC also acknowledges the support of the project \emph{Mathematical Modelling of Multi-scale Control Systems: applications to human diseases}  2022.03091.PTDC (https://doi.org/10.54499/2022.03091.PTDC), supported by national funds (OE), through FCT/MCTES. (CoSysM3). Part of this work was done during the stay of FACCC at Carnegie Mellon University (USA), supported by the CMU-Portugal Program, Federal University of Ceará (Brazil), supported by FUNCAPE, and City, University of London (UK).
All the authors contributed equally to the development of the work's ideas, computational codes, data analysis, discussions, and writing of the final version of the manuscript. 

%

\end{document}